\documentclass[conference]{IEEEtran}
\IEEEoverridecommandlockouts

\usepackage[utf8]{inputenc}
\usepackage[english]{babel}

\usepackage{cite}
\usepackage{amsmath,amssymb,amsfonts}
\usepackage{algorithm}
\usepackage{algpseudocode}

\makeatletter
\algrenewcommand\ALG@beginalgorithmic{\footnotesize}
\makeatother

\usepackage{afterpage}
\usepackage{graphicx}
\usepackage{textcomp}
\usepackage{xcolor}
\usepackage{amsthm}
\usepackage{paralist}
\usepackage{mathtools, nccmath}
\usepackage{caption}
\usepackage{subcaption}
\usepackage{setspace}
\usepackage{makecell}
\captionsetup[figure]{font=small,labelfont=small}

\makeatletter
\newcommand\fs@betterruled{%
	\def\@fs@cfont{\bfseries}\let\@fs@capt\floatc@ruled
	\def\@fs@pre{\vspace*{5pt}\hrule height.8pt depth0pt \kern2pt}%
	\def\@fs@post{\kern2pt\hrule\relax}%
	\def\@fs@mid{\kern2pt\hrule\kern2pt}%
	\let\@fs@iftopcapt\iftrue}
\floatstyle{betterruled}
\restylefloat{algorithm}
\makeatother

\newtheorem{theorem}{Theorem}
\newtheoremstyle{exampstyle}
{1pt} 
{1pt} 
{} 
{} 
{\bfseries} 
{.} 
{.5em} 
{} 
\theoremstyle{exampstyle} 
\theoremstyle{exampstyle} \newtheorem{remark}{Remark}
\theoremstyle{exampstyle} \newtheorem{definition}{Definition}
\theoremstyle{exampstyle} \newtheorem{lemma}{Lemma}
\theoremstyle{exampstyle} 

\DeclarePairedDelimiter{\ceil}{\lceil}{\rceil}

\setlength{\columnsep}{0.21in}
\def\BibTeX{{\rm B\kern-.05em{\sc i\kern-.025em b}\kern-.08em
		T\kern-.1667em\lower.7ex\hbox{E}\kern-.125emX}}
\begin{document}

\title{Client-Edge-Cloud Hierarchical Federated Learning}

\author{
	\IEEEauthorblockN{Lumin Liu$^\star$, Jun Zhang$^\dagger$, S.H. Song$^\star$, and Khaled B. Letaief$^{\star \ddagger}$, \emph{Fellow, IEEE}}\\
	\IEEEauthorblockA{$^\star$ Dept. of ECE, The Hong Kong University of Science and Technology, Hong Kong\\
			$^\dagger$Dept. of EIE, The Hong Kong Polytechnic University, Hong Kong\\
			$^\ddagger$Peng Cheng Laboratory, Shenzhen, China\\
		Email:{ lliubb@ust.hk,
			jun-eie.zhang@polyu.edu.hk,
			eeshsong@ust.hk,
			eekhaled@ust.hk}}
}               

\maketitle

\begin{abstract}
Federated Learning is a collaborative machine learning framework to train a deep learning model without accessing clients' private data. Previous works assume one central parameter server either at the cloud or at the edge. The cloud server can access more data but with excessive communication overhead and long latency, while the edge server enjoys more efficient communications with the clients. To combine their advantages, we propose a client-edge-cloud hierarchical Federated Learning system, supported with a HierFAVG algorithm that allows multiple edge servers to perform partial model aggregation.  In this way, the model can be trained faster and better communication-computation trade-offs can be achieved. Convergence analysis is provided for HierFAVG and the effects of key parameters are also investigated, which lead to qualitative design guidelines. Empirical experiments verify the analysis and demonstrate the benefits of this hierarchical architecture in different data distribution scenarios. Particularly, it is shown that by introducing the intermediate edge servers, the model training time and the energy consumption of the end devices can be simultaneously reduced compared to cloud-based Federated Learning.
	
\end{abstract}

\begin{IEEEkeywords}
Mobile Edge Computing, Federated Learning, Edge Learning
\end{IEEEkeywords}

\section{Introduction} \label{Introduction}
	\par
	The recent development in deep learning has revolutionalized many application domains, such as image processing, natural language processing, and video analytics \cite{deeplearningbook16goodfellow}. So far deep learning models are mainly trained at some powerful computing platforms, e.g., a cloud datacenter, with centralized collected massive datasets. Nonetheless, in many applications, data are generated and distributed at end devices, such as smartphones and sensors, and moving them to a central server for model training will violate the increasing privacy concern. Thus, privacy-preserving distribtued training has started to receive much attention.
	In 2017, Google proposed Federated Learning (FL) and a Federated Averaging (FAVG) algorithm \cite{FL17McMahan} to train a deep learning model without centralizing the data at the data center. With this algorithm, local devices download a global model from the cloud server, perform several epochs of local training, and then upload the model weights to the server for model aggregation. The process is repeated until the model reaches a desired accuracy, as illustrated in Fig. \ref{fig1}. 
	
	\setlength{\textfloatsep}{2pt plus 1.0pt minus 2.0pt}
	\begin{figure}[t] 
		\centerline{\includegraphics[keepaspectratio=true,scale=0.52]{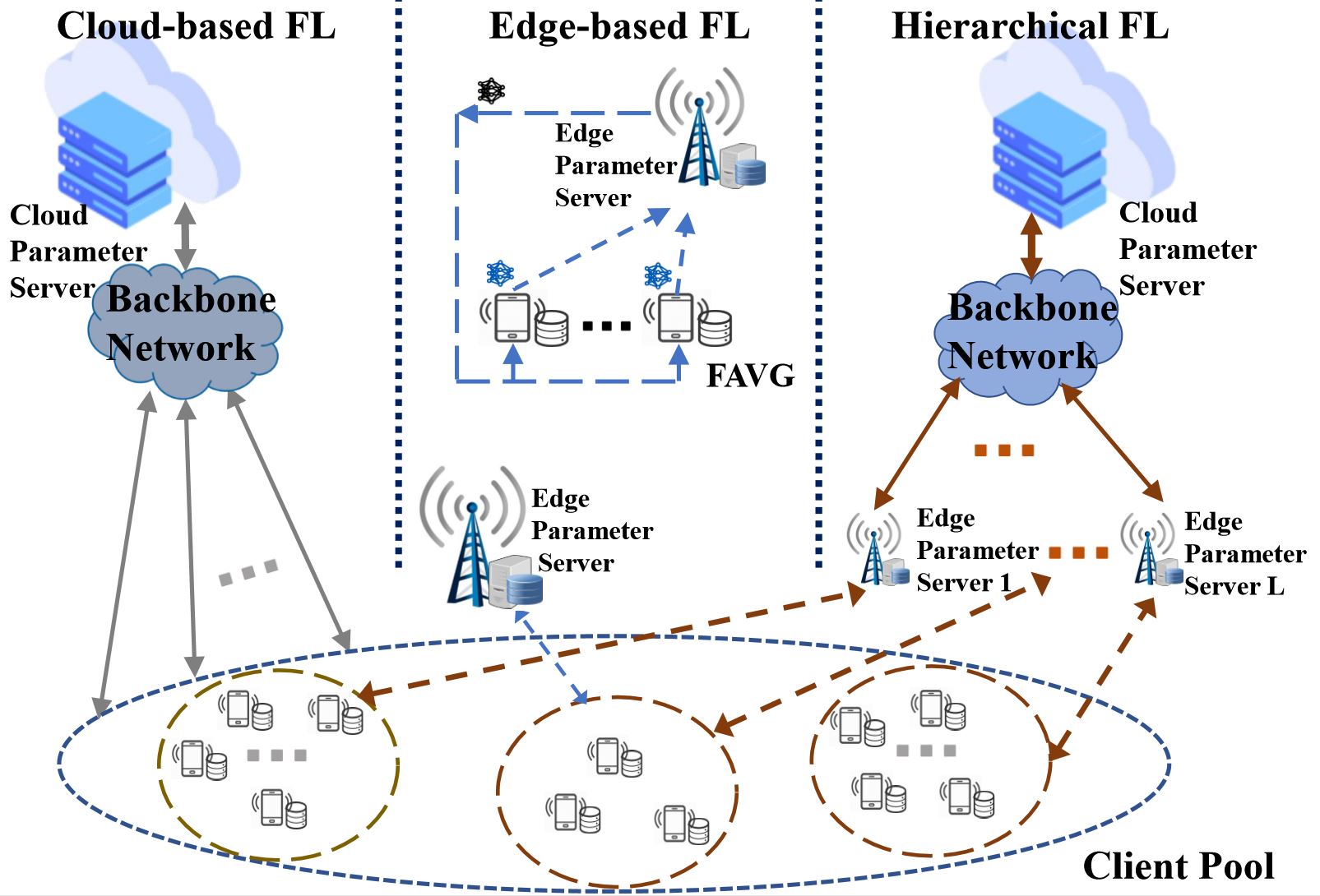}} 
		\caption{Cloud-based, edge-based and client-edge-cloud hierarchical FL. The process of the FAVG algorithm is also illustrated.}
		\label{fig1}
	\end{figure}
	\par 
	FL enables fully distributed training by decomposing the process into two steps, i.e., parallel model update based on local data at clients and global model aggregation at the server. Its feasibility has been verified in real-world implementation \cite{Keyboardprediction18Hard}. Thereafter, it has attracted great attentions from both academia and industry \cite{FLSurvey19Yang}. While most initial studies of FL assumed a cloud as the parameter server, with the recent emergence of edge computing platforms \cite{MECSurveyMao}, researchers have started investigating edge-based FL systems \cite{ClientSelection19Nishio, JSAC18Wang, FLOptimizationModel19Tran}. For edge-based FL, the proximate edge server will act as the parameter server, while the clients within its communication range of the server collaborate to train a deep learning model.
	\par
	While both the cloud-based and edge-based FL systems apply the same FAVG algorithm, there are some fundamental differences between the two systems, as shown in Fig. \ref{fig1}. In cloud-based FL, the participated clients in total can reach millions \cite{FLsys19Bonawitz}, providing massive datasets needed in deep learning. Meanwhile, the communication with the cloud server is slow and unpredictable, e.g., due to network congestion, which makes the training process inefficient \cite{FL17McMahan, FLsys19Bonawitz}. Analysis has shown a trade-off between the communication efficiency and the convergence rate for FAVG \cite{niidconvergence19Li}. Specifically, less communication is required at a price of more local computations. On the contrary, in edge-based FL, the parameter server is placed at the proximate edge, such as a base station. So the latency of the computation is comparable to that of communication to the edge parameter server. Thus, it is possible to pursue a better trade-off in computation and communication \cite{JSAC18Wang,FLOptimizationModel19Tran}. Nevertheless, one disadvantage of edge-based FL is the limited number of clients each server can access, leading to inevitable training performance loss.
	\par
	From the above comparison, we see a necessity in leveraging a cloud server to access the massive training samples, while each edge server enjoys quick model updates with its local clients. This motivates us to propose a client-edge-cloud hierarchical FL system as shown on the right side of Fig. \ref{fig1}, to get the best of both systems. Compared with cloud-based FL, hierarchical FL will significantly reduce the costly communication with the cloud, supplemented by efficient client-edge updates, thereby, resulting a significant reduction in both the runtime and number of local iterations. On the other hand, as more data can be accessed by the cloud server, hierarchical FL will outperform edge-based FL in model training. These two aspects are clearly observed from Fig. \ref{fig3}, which gives a preview of the results to be presented in this paper. While the advantages can be intuitively explained, the design of a hierarchical FL system is nontrivial. First, by extending the FAVG algorithm to the hierarchical setting, will the new algorithm still converge? Given the two levels of model aggregation (one at the edge, one at the cloud), how often should the models be aggregated at each level? Morever, by allowing frequent local updates, can a better latency-energy tradeoff be achieved? In this paper, we address these key questions. First, a rigorous proof is provided to show the convergence of the training algorithm. Through convergence analysis, some qualitavie guidelines on picking the aggregation frequencies at two levels are also given. Experimental results on \textit{MNIST} \cite{MNIST09} and \textit{CIFAR-10} \cite{CIFAR10-09} datasets support our findings and demonstrate the advantage of achieving better communication-computation tradeoff compared to cloud-based systems.
	
\section{Federated Learning Systems} \label{FLsys}

	In this section, we first introduce the general learning problem in FL. The cloud-based and edge-based FL systems differ only in the communication and the number of participated clients, and they are identical to each other in terms of architecture. Thus, we treat them as the same traditional two-layer FL system in this section and introduce the widely adopted FAVG \cite{FL17McMahan} algorithm. For the client-edge-cloud hierarchical FL system, we present the proposed three-layer FL system, and its optimization algorithm, namely, HierFAVG.

	\subsection{Learning Problem}\label{Learning Problem}
	\par
	We focus on supervised Federated Learning. Denote $\mathcal{D} = { \{ \boldsymbol{x}_j,y_j \}}_{j=1}^{|\mathcal{D}|}$ as the training dataset, and $|\mathcal{D}|$ as the total number of training samples, where $\boldsymbol{x}_j$ is the $j$-th input sample, $y_j$ is the corresponding label. $\boldsymbol{w}$ is a real vector that fully parametrizes the ML model. $f(\boldsymbol{x}_j,y_j,\boldsymbol{w})$, also denoted as $f_j(\boldsymbol{w})$ for convenience, is the loss function of the $j$-th data sample, which captures the prediction error of the model for the $j$-th data sample.
	The training process is to minimize the empirical loss $F(\boldsymbol{w})$ based on the training dataset \cite{Optimization18Bottou}:
	\useshortskip
	\begin{equation} \label{LossFunc}
	\small
	F(\boldsymbol{w}) = \frac{1}{|\mathcal{D}|} \sum_{j=1}^{|\mathcal{D}|} f(\boldsymbol{x}_j,y_j,\boldsymbol{w}) = \frac{1}{|\mathcal{D}|} \sum_{j=1}^{|\mathcal{D}|} f_j(\boldsymbol{w}).
	\end{equation}
	
	The loss funciton $F(\boldsymbol{w})$ depends on the ML model and can be convex, e.g. logistic regression, or non-convex, e.g. neural networks. The complex learning problem is usually solved by gradient descent. Denote $k$ as the index for the update step, and $\eta$ as the gradient descent step size, then the model parameters are updated as:
	\vspace{-2mm}
	\begin{equation*}
	\boldsymbol{w}(k) = \boldsymbol{w}(k-1) - \eta \nabla F(\boldsymbol{w}(k-1)).
	\end{equation*}
	\vspace{-5mm}
	\par
	In FL, the dataset is distributed on $N$ clients as $\{\mathcal{D}_i\}_{i=1}^N$, with $\cup_{i=1}^N\mathcal{D}_i=\mathcal{D}$ and these distributed datasets cannot be directly accessed by the parameter server. Thus, $F(\boldsymbol{w})$ in Eq. \eqref{LossFunc}, also called the global loss, cannot be directly computed, but can only be computed in the form of a weighted average of the local loss functions $F_i(\boldsymbol{w})$, on local datasets $\mathcal{D}_i$. Specifically, $F(\boldsymbol{w})$ and $F_i(\boldsymbol{w})$ are given by:
	\vspace{-2mm}
	\begin{equation*}
	\small
	F(\boldsymbol{w}) =\frac{\sum_{i=1}^{N}|\mathcal{D}_i| F_i(\boldsymbol{w})}{|\mathcal{D}|}, \quad
	F_i(\boldsymbol{w}) = \frac{\sum_{j\in \mathcal{D}_{i}} f_j(\boldsymbol{w})}{|\mathcal{D}_i|}.
	\end{equation*}
	\vspace{-5mm}
	\setlength{\textfloatsep}{2pt plus 1.0pt minus 2.0pt}
	\begin{figure}[t] 
		\centerline{\includegraphics[keepaspectratio=true,scale=0.25]{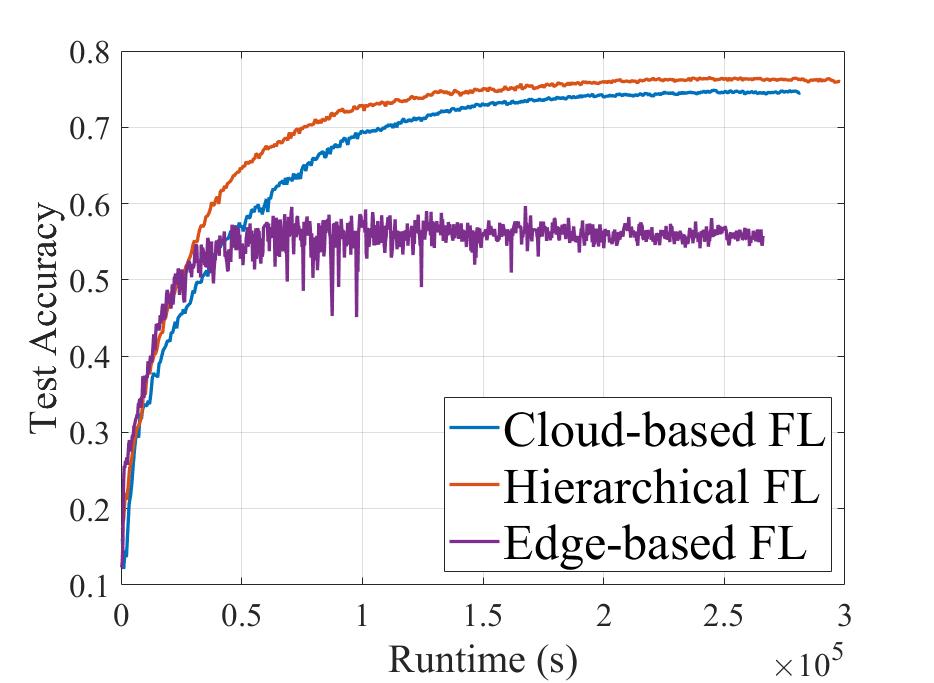}}
		\caption{Testing Accuracy w.r.t to the runtime on \textit{CIFAR}-10.}
		\label{fig3}
	\end{figure}
	\subsection{Traditional Two-Layer FL}\label{Two-Layer}
	\par
	In the traditional two-layer FL system, there are one central parameter server and $N$ clients. 
	To reduce the communication overhead, the FAVG algorithm\cite{FL17McMahan} communicates and aggreagtes after every $\kappa$ steps of gradient descent on each client. The process repeats until the model reaches a desired accuracy or the limited resources, e.g., the communication or time budget, run out.
	\par
	Denote $\boldsymbol{w}_i(k)$ as the parameters of the local model on the $i$-th client, then $\boldsymbol{w}_i(k)$ in FAVG evolves in the following way: 
	
	\begin{equation*}
	\text{\small $\boldsymbol{w}_i(k)$ } = 
	\begin{cases}
	\text{\small $\boldsymbol{w}_i(k-1) - \eta _k \nabla F_i(\boldsymbol{w}_i(k-1))$} &  \text{\footnotesize $k \mid \kappa \neq 0$}\\[8pt]
	\text{ $\frac{\sum_{i=1}^N | \mathcal{D}_i|\big[\boldsymbol{w}_i(k-1) - \eta _k \nabla F_i(\boldsymbol{w}_i(k-1)) \big]}{| \mathcal{D}|} $} 
	&
	\text{\footnotesize $k \mid \kappa = 0$}
	\end{cases} 
	\end{equation*}
	
	\subsection{Client-Edge-Cloud Hierarchical FL}\label{Hierarchical}
	In FAVG, the model aggregation step can be interpreted as a way to exchange information among the clients. Thus, aggregation at the cloud parameter server can incorporate many clients, but the communicaiton cost is high. On the other hand, aggregation at the edge parameter server only incorporates a small number of clients with much cheaper communicaiton cost. To combine their advantages, we consider a hierarchical FL system, which has one cloud server, $L$ edge servers indexed by $\ell$, with disjoint client sets $\{\mathcal{C}^\ell\}_{\ell=1}^L$, and $N$ clients indexed by $i$ and $\ell$, with distributed datasets $\{\mathcal{D}_i^\ell\}_{i=1}^N$. Denote $\mathcal{D}^\ell$ as the aggregated dataset under edge $\ell$. Each edge server aggregates models from its clients.
	\par
	With this new architecture, we extend the FAVG to a HierFAVG algorithm. The key steps of the HierFAVG algorithm proceed as follows. After every $\kappa_1$ local updates on each client, each edge server aggregates its clients' models. Then after every $\kappa_2$ edge model aggregations, the cloud server aggregates all the edge servers' models, which means that the communication with the cloud happens every $\kappa_1 \kappa_2$ local updates. The comparison between FAVG and \textit{HierFAVG} is illustrated in Fig. \ref{fig4}.
	Denote $\boldsymbol{w}_i^{\ell}(k)$ as the local model parameters after the $k$-th local update, and $K$ as the total amount of local updates performed, which is assumed to be an integer multiple of $\kappa_1 \kappa_2$.
	Then the details of the HierFAVG algorithm are presented in Algorithm \ref{alg:hfavg}. And the evolution of local model parameters $\boldsymbol{w}_i^{\ell}(k)$ is as follows:
	
	\useshortskip
	\begin{equation*}
	\text{\small $\boldsymbol{w}_i^{\ell}(k)$ } = 
	\begin{cases}
	\text{\small $\boldsymbol{w}_i^{\ell}(k-1) - \eta _t \nabla F_i^{\ell}(\boldsymbol{w}_i^{\ell}(k-1))$} &  \text{\footnotesize $k \mid \kappa_1 \neq 0$}\\[8pt]
	\text{\small $\frac{\sum_{i\in \mathcal{C}^\ell} |\mathcal{D}_i^\ell|\big[\boldsymbol{w}_i^{\ell}(k-1) - \eta _k \nabla F_i^{\ell}(\boldsymbol{w}_i^{\ell}(k-1)) \big]}{|\mathcal{D}^{\ell}|} $} 
	&
	\begin{aligned}
	\text{\footnotesize $k \mid \kappa_1 = 0$ } \\
	\text{\footnotesize $k \mid \kappa_1 \kappa_2 \neq 0$ }
	\end{aligned}
	\\[8pt]
	\text{\small $\frac{\sum_{i=1}^N |\mathcal{D}_i^\ell| \big[\boldsymbol{w}_i^{\ell}(k-1) - \eta _k\nabla F_i^{\ell}(\boldsymbol{w}_i^{\ell}(k-1)) \big] }{|\mathcal{D}|}$} 
	&
	\text{\footnotesize $k \mid \kappa_1 \kappa_2 = 0$ }	
	\end{cases} 
	\end{equation*}
	
	\setlength{\textfloatsep}{1pt}
	\begin{algorithm}[t]
		\small
		\caption{\strut Hierarchical Federated Averaging (HierFAVG)}
		\label{alg:hfavg}
		\begin{algorithmic}[1]
			\Procedure{HierarchicalFederatedAveraging}{} 
			\State {Initialized all clients with parameter $\boldsymbol{w}_0$}
			\For{$k = 1,2,\dots K$} 
			\For{each client $ i = 1,2,\dots,N$ in parallel}              
			\State {$\boldsymbol{w}_i^{\ell}(k)$ $\gets$ $\boldsymbol{w}_i^{\ell}(k-1) - \eta \nabla F_i(\boldsymbol{w}_i^{\ell}(k-1))$ }
			\EndFor
			\If{$k \mid \kappa_1 = 0 $}
			\For {each edge $\ell = 1,\dots, L$ in parallel}
			\State {$\boldsymbol{w}^{\ell}(k) \gets $ EdgeAggregation($\{\boldsymbol{w}_i^{\ell}(k) \}_{i \in \mathcal{C}^\ell}$)}
			\If{$k \mid \kappa_1 \kappa_2 \neq 0$}
			\For {each client $ i \in \mathcal{C}^\ell$ in parallel}
			\State {$\boldsymbol{w}_i^{\ell}(k)$ $\gets$ $\boldsymbol{w}^{\ell}(k)$}
			\EndFor
			\EndIf
			
			\EndFor
			
			\EndIf
			
			\If{$ k\mid \kappa_1 \kappa_2 = 0$}
			\State {$\boldsymbol{w}(k) \gets $ CloudAggregation($\{\boldsymbol{w}^{\ell}(k) \}_{\ell =1}^L$)}
			\For {each client $ i =1 \dots N$ in parallel}
			\State {$\boldsymbol{w}_i^{\ell}(k)$ $\gets$ $\boldsymbol{w}(k)$}
			\EndFor
			\EndIf
			
			\EndFor
			\EndProcedure
			
			\vspace{-3mm}
			\Statex
			\Function{EdgeAggregation}{$\ell,\{\boldsymbol{w}_i^{\ell}(k) \}_{i \in \mathcal{C}^\ell}$} //{\footnotesize \textit{Aggregate locally}}
			\State {$\boldsymbol{w}^{\ell}(k)$ $\gets$ $ \frac{\sum_{i\in \mathcal{C}^\ell}|\mathcal{D}_i^\ell| \boldsymbol{w}_i^{\ell}(k)}{|\mathcal{D}^{\ell}| } $}
			\State \Return {$\boldsymbol{w}^{\ell}(k)$}
			\EndFunction
			
			\vspace{-3mm}
			\Statex
			\Function{CloudAggregation}{$\{\boldsymbol{w}^{\ell}(k) \}_{\ell=1}^L$} //{\footnotesize \textit{Aggregate globally}}
			\State {$\boldsymbol{w}(k)$ $\gets$ $ \frac{\sum_{\ell =1}^L |\mathcal{D}^\ell | \boldsymbol{w}^{\ell}(k)}{|\mathcal{D}|} $}
			\State \Return {$\boldsymbol{w}(k)$}
			\EndFunction	
		\end{algorithmic}
	\end{algorithm}
	
	\useshortskip
	\setlength{\textfloatsep}{2pt plus 1.0pt minus 2.0pt}
	\begin{figure}[t] 
		\centerline{\includegraphics[keepaspectratio=true,scale=0.28]{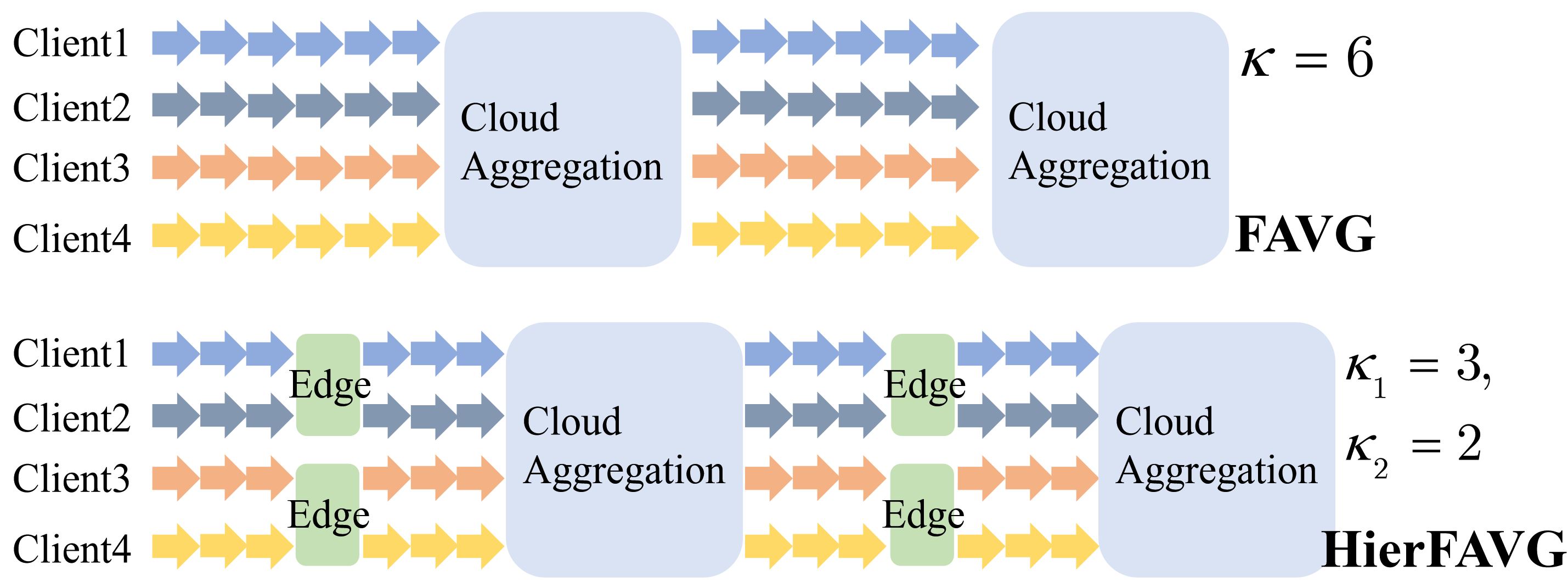}}
		\caption{Comparison of FAVG and HierFAVG.}
		\label{fig4}
	\end{figure}
\section{Convergence Analysis of HierFAVG} \label{Theoretical Analysis}
	In this section, we prove the convergence of HierFAVG for both convex and non-convex loss functions. The analysis also reveals some key properties of the algorithm, as well as the effects of key parameters.
	\useshortskip
	\subsection{Definitions}
	Some essential definitions need to be explained before the analysis. The overall $K$ local training iterations are divided into $B$ cloud intervals, each with a length of $\kappa_1 \kappa_2$, or $B\kappa_2$ edge intervals, each with a length of $\kappa_1$. The local (edge) aggregation happens at the end of each edge interval, and the global (cloud) aggregations happens at the end of each cloud interval. We use 
	$[p]$ to represent the edge interval starting from $(p-1)\kappa_1$ to $p\kappa_1$, 
	and $\{q\}$ to represent the cloud interval from $(q-1)\kappa_1 \kappa_2$ to $q\kappa_1 \kappa_2$,
	so we have $\{q\} = \cup _p [p],p = (q-1)\kappa_2+1,(q-1)\kappa_2 +2,\dots,q\kappa_2$.
	
	\begin{itemize}
	\item $F^{\ell}(\boldsymbol{w})$: The edge loss function at edge server $\ell$, is expressed as:
	\useshortskip
	\begin{equation*}
	\small
		F^{\ell}(\boldsymbol{w}) = \frac{1}{|\mathcal{D}^{\ell}|} \sum_{i\in \mathcal{C}^{\ell}}|\mathcal{D}_i^\ell | F_i(\boldsymbol{w}).
	\end{equation*}
	\vspace{-2mm}
	\item $\boldsymbol{w}(k)$: The weighted average of $\boldsymbol{w}_i^{\ell} (k)$, is expressed as:
	\useshortskip
	\begin{equation*}
	\small
	\boldsymbol{w}(k) = \frac{1}{|\mathcal{D}|}\sum_{i=1}^N |\mathcal{D}_i^\ell | \boldsymbol{\bar{w}}_i^{\ell}(k).
	\end{equation*}
	\vspace{-2mm}
	
	\item $\boldsymbol{u}_{\{q\}}(k)$: The virtually centralized gradient descent sequence, defined in cloud interval $\{q\}$, and synchronized immediately with $\boldsymbol{w}(k)$ after every cloud aggregation as:
	\vspace{-2mm}
	\begin{equation*}
	\label{uqt}
	\small
	\begin{aligned}
	&\boldsymbol{u}_{\{q\}}((q-1)\kappa_1 \kappa_2) = \boldsymbol{w} ((q-1)\kappa_1 \kappa_2),\\
	&\boldsymbol{u}_{\{q\}}(k+1) = \boldsymbol{u}_{\{q\}}(k) - \eta_k \nabla F(\boldsymbol{u}_{\{q\}}(k)).
	\end{aligned}		
	\end{equation*}
	
	\end{itemize}

	\par 
	The key idea of the proof is to show that the true weights $\boldsymbol{w}(k)$ do not deviate much from the virtuallly centralized sequence $\boldsymbol{u}_{\{q\}}(k)$. Using the same method, the convergence of two-layered FL was analyzed in \cite{JSAC18Wang}.
	
	\begin{lemma}[Convergence of FAVG\cite{JSAC18Wang}]
		For any $i$, assuming $f_i(w)$ is $\rho$-continuous, $\beta$-smooth, and convex. Als, let $F_{inf} = F(\boldsymbol{w}^*)$. If the deviation of distributed weights has an upper bound denoted as $M$, then for FAVG with a fixed step size $\eta$ and an aggregation interval $\kappa$, after $K = B\kappa$ local updates, we have the following convergence upper bound:
		\useshortskip
		\begin{equation*}
		F(w(K)) - F(\boldsymbol{w}^*) \leq \frac{1}{B(\eta \varphi - \frac{\rho M}{\kappa \varepsilon^2})}
		\end{equation*}
		when the following conditions are satisfied:
		\begin{inparaenum} 
			\small
			\item $\eta \leq \frac{1}{\beta}$
			\item $\eta \varphi - \frac{\rho M}{k \varepsilon^2} > 0$ 
			\item $F(v_{b}(b k)) - F(\boldsymbol{w}^*) \geq \varepsilon $ for $b=1,\dots,\frac{K}{\kappa}$
			\item $F(w(K)) - F(\boldsymbol{w}^*) \geq \varepsilon $
		\end{inparaenum}
		for some $\varepsilon > 0$, $\omega = \min_b \frac{1}{\|F(v_{b}((b-1)\kappa)) - F(\boldsymbol{w}^*)\|}$, $\varphi = \omega (1-\frac{\beta \eta}{2})$.
		\label{lemma3}
	\end{lemma}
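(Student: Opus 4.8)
The plan is to adapt the perturbed-iterate argument of \cite{JSAC18Wang}. For each aggregation interval $[b]$, i.e.\ the local steps from $(b-1)\kappa$ to $b\kappa$, I would introduce an auxiliary \emph{centralized} sequence $v_b(k)$ that is reset to the current aggregated model at the start of the interval, $v_b((b-1)\kappa) = w((b-1)\kappa)$, and thereafter follows exact gradient descent on the global loss, $v_b(k+1) = v_b(k) - \eta\nabla F(v_b(k))$. The FAVG iterate $w(k)$ departs from $v_b(k)$ only through the independent local steps taken within the interval, and by hypothesis this drift is bounded, $\|w(k) - v_b(k)\| \le M$; in particular $\|w(b\kappa) - v_b(b\kappa)\| \le M$ at the aggregation instant.

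First I would prove a one-interval progress bound for the centralized sequence. Because each $f_i$ (hence $F$) is $\beta$-smooth and $\eta \le \tfrac{1}{\beta}$ (condition (1)), the descent lemma gives $F(v_b(k+1)) \le F(v_b(k)) - \eta\big(1 - \tfrac{\beta\eta}{2}\big)\|\nabla F(v_b(k))\|^2$; convexity of $F$ gives $F(v_b(k)) - F_{inf} \le \|\nabla F(v_b(k))\|\,\|v_b(k) - \boldsymbol{w}^*\|$, and $\eta\le\tfrac1\beta$ keeps $\|v_b(k)-\boldsymbol{w}^*\|$ from growing within the interval. Whenever the optimality gap along the virtual trajectory stays at least $\varepsilon$ — which is precisely what conditions (3)--(4) guarantee — the squared-gradient term is bounded below by a positive multiple of $\big(F(v_b(k)) - F_{inf}\big)^2$, with the constant captured by $\omega$. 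Converting this quadratic recursion on $F(v_b(k)) - F_{inf}$ into a linear recursion on its reciprocal and summing over the $\kappa$ steps of the interval yields a per-interval gain of the form $\tfrac{1}{F(v_b(b\kappa)) - F_{inf}} - \tfrac{1}{F(v_b((b-1)\kappa)) - F_{inf}} \ge \kappa\eta\varphi$, with $\varphi = \omega\big(1 - \tfrac{\beta\eta}{2}\big)$ as in the statement.

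Next I would pay for the reset between intervals. At the end of interval $[b]$ the algorithm holds $w(b\kappa)$, not $v_b(b\kappa)$, and the next interval is initialized from $w(b\kappa)$; by $\rho$-Lipschitz continuity of $F$ together with the drift bound, $|F(w(b\kappa)) - F(v_b(b\kappa))| \le \rho M$. Invoking once more that the relevant optimality gaps are $\ge \varepsilon$, this additive $\rho M$ error costs at most $\tfrac{\rho M}{\varepsilon^2}$ in the reciprocal optimality gap at each interval boundary. Chaining the within-interval gains against the at-boundary losses over all $B = K/\kappa$ intervals and telescoping gives $\tfrac{1}{F(w(K)) - F_{inf}} \ge \tfrac{1}{F(w(0)) - F_{inf}} + B\big(\eta\varphi - \tfrac{\rho M}{\kappa\varepsilon^2}\big)$; dropping the nonnegative first term and inverting — valid because condition (2) makes the parenthesis positive — delivers $F(w(K)) - F(\boldsymbol{w}^*) \le \tfrac{1}{B(\eta\varphi - \rho M/(\kappa\varepsilon^2))}$.

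The main obstacle is the bookkeeping in the last two steps rather than any single inequality: one must invoke the $\varepsilon$-lower-bound conditions consistently so that every division by an optimality gap is legitimate — both in the convexity step and in turning the $\rho M$ perturbation into a $\tfrac{\rho M}{\varepsilon^2}$ reciprocal loss — and one must line up the endpoints of consecutive intervals correctly, $v_b(b\kappa) \leftrightarrow w(b\kappa) \leftrightarrow v_{b+1}(b\kappa)$, when telescoping. The smoothness and convexity manipulations of the first step are standard, and I would rely on \cite{JSAC18Wang} for the precise constants rather than re-deriving them here.
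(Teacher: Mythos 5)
The paper offers no proof of this lemma at all --- it is imported verbatim from \cite{JSAC18Wang} --- and your sketch is a correct reconstruction of that reference's perturbed-iterate argument: the per-interval virtual sequence $v_b$, the descent-plus-convexity recursion on the reciprocal gap yielding $\kappa\eta\varphi$ per interval, and the $\rho M/\varepsilon^2$ penalty at each synchronization boundary. The only blemish is a bookkeeping slip in the final telescoping: your own intermediate quantities (gain $\kappa\eta\varphi$ per interval, loss $\rho M/\varepsilon^2$ per boundary) sum to $B\kappa\bigl(\eta\varphi - \tfrac{\rho M}{\kappa\varepsilon^2}\bigr)$ rather than the $B\bigl(\eta\varphi - \tfrac{\rho M}{\kappa\varepsilon^2}\bigr)$ you wrote, which happens to match the lemma as (mis)stated here but not the $T = B\kappa$ normalization of \cite{JSAC18Wang} and of Theorem~\ref{cf}.
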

	
	\vspace{1mm}
	\par
	The unique non-Independent and Identicallly Distributed (non-IID) data distribution in FL is the key property that distinguishes FL from distributed learning in datacenter. Since the data are generated seperately by each client, the local data distribution may be unbalanced and the model performance will be heavily influenced by the non-IID data distribution \cite{FLwithniid18Zhao}. In this paper, we adopt the same measurement as in \cite{JSAC18Wang} to measure the two-level non-IIDness in our hierarchical system, i.e., the client level and the edge level.
	
	\begin{definition}[Gradient Divergence]
		For any weight parameter $\boldsymbol{w}$, the gradient divergence between the local loss function of the $i$-th client, and the edge loss function of the $\ell$-th edge server is defined as an upper bound of $\|\nabla F_i^{\ell}(\boldsymbol{w}) - \nabla F^{\ell}(\boldsymbol{w}) \|$, denoted as $\delta _i^{\ell}$; the gradient divergence between the edge loss function of the $\ell_{th}$ edge server and the global loss function is defined as an upperbound of $\|\nabla F^{\ell}(\boldsymbol{w}) - \nabla F(\boldsymbol{w}) \|$, denoted as $\Delta^{\ell}$. Specifically,
		\begin{align*}
		\small
		\|\nabla F_i^{\ell}(\boldsymbol{w}) - \nabla F^{\ell}(\boldsymbol{w}) \| & \leq \delta _i^{\ell}, \\
		\|\nabla F^{\ell}(\boldsymbol{w}) - \nabla F(\boldsymbol{w}) \| & \leq \Delta^{\ell}.
		\end{align*}
		Define $\delta = \frac{\sum_{i=1}^N |\mathcal{D}_i^{\ell}|\delta_i^{\ell} }{|\mathcal{D}|} $, $\Delta = \frac{\sum_{\ell=1}^L |\mathcal{D}^{\ell}|\Delta^{\ell} }{|\mathcal{D}|} = \frac{\sum_{i=1}^N |\mathcal{D}_i^{\ell}|\Delta^{\ell} }{|\mathcal{D}|} $, and we call $\delta$ as the Client-Edge divergence, and $\Delta$ as the Edge-Cloud divergence.
	\end{definition}

	A larger gradient divergence means the dataset distribution is more non-IID. $\delta$ reflects the non-IIDness at the client level, while $\Delta$ reflects the non-IIDness at the edge level.
	
	\subsection{Convergence}
	In this section, we prove that HierFAVG converges. The basic idea is to study is how the real weights $\boldsymbol{w}(k)$ deviate from the virtually centralized sequence $\boldsymbol{u}_{\{q\}}(k)$ when the parameters in HierFAVG algorithm vary. In the following two lemmas, we prove an upper bound for the distributed weights deviation for both convex and non-convex loss functions. 
	\begin{lemma}[Convex]
		For any $i$, assuming $f_i(w)$ is $\beta$-smooth and convex, then for any cloud interval $\{q\}$ with a fixed step size $\eta_q$ and $k \in \{q\}$, we have
		\vspace{-2mm}
		\begin{equation*}
			\|\boldsymbol{w}(k) - \boldsymbol{u}_{\{q\}}(k)\| \leq G_c(k, \eta_q),
		\end{equation*}
		where 
		\vspace{-2mm}
		\begin{equation*}
		\small
			\begin{split}
			G&_c(k, \eta_q) = h(k - (q-1)\kappa_1 \kappa_2, \Delta,\eta _q)\\
			& \quad \quad \quad +h\big( k - ((q-1)\kappa_2+p(k)-1)\kappa_1,\delta, \eta _q \big) \\
			& \quad \quad \quad + \frac{\kappa_1}{2} \big(p^2(k) + p(k) -2 \big) h(\kappa_1,\delta, \eta _q),\\
			h&(x, \delta, \eta) = \frac{\delta}{\beta}\big((\eta \beta +1)^x -1\big) -\eta \beta x,\\
			p&(x) = \ceil{\frac{x}{\kappa_1} - (q-1)\kappa_2}.
			\end{split}
			\label{gap}
		\end{equation*}
		\label{lemma1}
	\end{lemma}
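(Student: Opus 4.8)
The plan is to run the single-level deviation argument from \cite{JSAC18Wang} (the one behind Lemma~\ref{lemma3}), but with a \emph{second}, inner layer of virtual centralized sequences, so that the client-level and the edge-level aggregations can be peeled off one at a time. Fix a cloud interval $\{q\}$, abbreviate $\eta=\eta_q$, and let $t_0=(q-1)\kappa_1\kappa_2$ be its first iteration; write $\bar{\boldsymbol{w}}^{\ell}(k)=\tfrac{1}{|\mathcal{D}^{\ell}|}\sum_{i\in\mathcal{C}^{\ell}}|\mathcal{D}_i^{\ell}|\boldsymbol{w}_i^{\ell}(k)$, which is continuous across edge aggregations and satisfies $\boldsymbol{w}(k)=\tfrac{1}{|\mathcal{D}|}\sum_{\ell}|\mathcal{D}^{\ell}|\bar{\boldsymbol{w}}^{\ell}(k)$. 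For each edge $\ell$ I would introduce, on $\{q\}$: a sequence $\boldsymbol{s}^{\ell}$ equal to $\boldsymbol{w}(t_0)$ at $k=t_0$ that then runs centralized gradient descent on the \emph{edge} loss, $\boldsymbol{s}^{\ell}(k{+}1)=\boldsymbol{s}^{\ell}(k)-\eta\nabla F^{\ell}(\boldsymbol{s}^{\ell}(k))$, with no restart at edge aggregations; and, for each edge sub-interval of $\{q\}$, a sequence $\boldsymbol{v}^{\ell}$ equal to $\bar{\boldsymbol{w}}^{\ell}$ at the start of that sub-interval that thereafter also descends on $F^{\ell}$. Using $\boldsymbol{v}^{\ell}$ for the one attached to the sub-interval containing $k$, the triangle inequality bounds $\|\boldsymbol{w}(k)-\boldsymbol{u}_{\{q\}}(k)\|$ by the sum of (a) $\tfrac{1}{|\mathcal{D}|}\sum_{\ell}|\mathcal{D}^{\ell}|\,\|\bar{\boldsymbol{w}}^{\ell}(k)-\boldsymbol{v}^{\ell}(k)\|$, (b) $\tfrac{1}{|\mathcal{D}|}\sum_{\ell}|\mathcal{D}^{\ell}|\,\|\boldsymbol{v}^{\ell}(k)-\boldsymbol{s}^{\ell}(k)\|$, and (c) $\big\|\tfrac{1}{|\mathcal{D}|}\sum_{\ell}|\mathcal{D}^{\ell}|\boldsymbol{s}^{\ell}(k)-\boldsymbol{u}_{\{q\}}(k)\big\|$; these will produce, in order, the $\delta$-term for the current edge interval, the accumulated $\delta$-terms from the earlier edge intervals of $\{q\}$, and the $\Delta$-term of $G_c$.

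For groups (a) and (c) I would use the same one-step mechanism used in \cite{JSAC18Wang} to prove Lemma~\ref{lemma3}: subtract the two one-step updates, cancel the ``mean'' gradient using $\tfrac{1}{|\mathcal{D}^{\ell}|}\sum_{i\in\mathcal{C}^{\ell}}|\mathcal{D}_i^{\ell}|\nabla F_i=\nabla F^{\ell}$ and $\tfrac{1}{|\mathcal{D}|}\sum_{\ell}|\mathcal{D}^{\ell}|\nabla F^{\ell}=\nabla F$, bound the residual with $\beta$-smoothness and the appropriate gradient divergence ($\delta_i^{\ell}$ at the client level, $\Delta^{\ell}$ at the edge level), and solve the resulting affine recurrence from the zero initial gap created by the synchronization that starts the interval (a cloud aggregation for (c); the aggregation starting the current edge interval for (a)). This gives, per edge, $h\big(k-((q{-}1)\kappa_2+p(k){-}1)\kappa_1,\ \delta^{\ell},\ \eta\big)$ for (a) with $\delta^{\ell}=\tfrac{1}{|\mathcal{D}^{\ell}|}\sum_{i\in\mathcal{C}^{\ell}}|\mathcal{D}_i^{\ell}|\delta_i^{\ell}$, and $h\big(k-t_0,\ \Delta^{\ell},\ \eta\big)$ for (c). Averaging these bounds over $\ell$ with weights $|\mathcal{D}^{\ell}|/|\mathcal{D}|$ collapses $\delta^{\ell}\mapsto\delta$ and $\Delta^{\ell}\mapsto\Delta$ exactly---the divergence enters $h$ linearly, and by the definitions $\sum_{\ell}\tfrac{|\mathcal{D}^{\ell}|}{|\mathcal{D}|}\delta^{\ell}=\delta$ and $\sum_{\ell}\tfrac{|\mathcal{D}^{\ell}|}{|\mathcal{D}|}\Delta^{\ell}=\Delta$---so these two groups reproduce the first and second terms of $G_c$ verbatim.

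The genuinely new part, with no analogue in the two-layer setting (where a virtual sequence only has to be accurate on one interval), is group (b). The edge aggregations restart the $\boldsymbol{v}^{\ell}$'s inside $\{q\}$ at times that match no restart of $\boldsymbol{s}^{\ell}$ or $\boldsymbol{u}_{\{q\}}$, so the client-level drift accumulated in each earlier edge interval is never cleared before the next cloud aggregation and keeps feeding into the later gradient steps. I would telescope: since $\boldsymbol{v}^{\ell}$ and $\boldsymbol{s}^{\ell}$ both descend on $F^{\ell}$ over the current edge interval, $\|\boldsymbol{v}^{\ell}(k)-\boldsymbol{s}^{\ell}(k)\|$ is controlled by the gap $\|\bar{\boldsymbol{w}}^{\ell}-\boldsymbol{s}^{\ell}\|$ at the start of that interval, which is at most the same gap at the start of the \emph{previous} edge interval plus one fresh within-interval drift $h(\kappa_1,\delta^{\ell},\eta)$ (the group-(a) estimate over a full length-$\kappa_1$ interval), and so on back to $t_0$, where the gap vanishes. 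Propagating each of these $p(k)-1$ injected drifts through the remaining local steps of $\{q\}$ and summing over the earlier edge intervals yields the coefficient $\tfrac{\kappa_1}{2}\big(p^2(k)+p(k)-2\big)$ in front of $h(\kappa_1,\delta,\eta_q)$, again with $\delta^{\ell}\mapsto\delta$ by averaging. The main obstacle is making this telescoping tight and keeping the indices aligned (the edge-interval index $p(k)$, the offset $((q{-}1)\kappa_2+p(k){-}1)\kappa_1$, and the per-interval propagation count) so that the coefficient comes out exactly as stated; convexity and $\beta$-smoothness of the $f_i$---with a step size small enough ($\eta\le1/\beta$) for each $F^{\ell}$-descent map to stay non-expansive---are what keep the propagation controllable, and the convex bound then follows by adding the three groups.
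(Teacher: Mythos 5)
The paper itself does not reproduce the proof of Lemma~\ref{lemma1} (it only states that the argument follows the virtual-centralized-sequence method of \cite{JSAC18Wang}), so I am judging your plan against the argument the lemma's three-term structure forces. Your decomposition is the right one and almost certainly the intended one: the edge-level sequences $\boldsymbol{s}^{\ell}$ descending on $F^{\ell}$ from the cloud synchronization point give the $h(k-(q-1)\kappa_1\kappa_2,\Delta,\eta_q)$ term via the single-level argument applied with divergences $\Delta^{\ell}$, the per-edge-interval sequences $\boldsymbol{v}^{\ell}$ give the $h(k-((q-1)\kappa_2+p(k)-1)\kappa_1,\delta,\eta_q)$ term, and the weighted averaging collapses $\delta^{\ell}\mapsto\delta$, $\Delta^{\ell}\mapsto\Delta$ exactly as you say. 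Groups (a) and (c) are essentially complete.

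The gap is group (b), which is the only genuinely hierarchical term, and your description of the mechanism does not produce the coefficient you claim it produces. If, as you propose, the gap at the start of edge interval $j{+}1$ exceeds the gap at the start of interval $j$ by one fresh drift $h(\kappa_1,\delta^{\ell},\eta)$ and the $F^{\ell}$-descent map is non-expansive, then telescoping gives $\|\boldsymbol{v}^{\ell}(k)-\boldsymbol{s}^{\ell}(k)\|\le (p(k)-1)\,h(\kappa_1,\delta^{\ell},\eta)$ --- a coefficient linear in $p(k)$ with no $\kappa_1$ factor, not $\tfrac{\kappa_1}{2}(p^2(k)+p(k)-2)$. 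That smaller bound would still imply the lemma (since $p-1\le\tfrac{\kappa_1}{2}(p^2+p-2)$ for $p,\kappa_1\ge 1$), but only if you can actually invoke non-expansiveness, which requires $\eta_q\le 2/\beta$ --- a step-size condition that appears nowhere in the lemma statement, so you would be proving a conditional version of it. Without that condition the propagation of an initial gap through $x$ further steps is only controlled by $(1+\eta\beta)^{x}$, and the accumulated drift becomes the geometric sum $\sum_{j}(1+\eta\beta)^{j\kappa_1}h(\kappa_1,\delta,\eta)$ --- which is exactly the coefficient $\kappa_1\kappa_2\frac{(1+\eta_q\beta)^{\kappa_1\kappa_2}-1}{(1+\eta_q\beta)^{\kappa_1}-1}$ appearing in the non-convex Lemma~\ref{lemma2}. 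So the entire content of the convex case is the step that converts this geometric amplification into the stated polynomial $\tfrac{\kappa_1}{2}(p^2+p-2)$, and that is precisely the step you defer with ``yields the coefficient'' and flag as the main obstacle. To close the proof you must either (i) justify non-expansiveness under an explicit step-size assumption and then observe your tighter linear-in-$p$ bound dominates, or (ii) exhibit the convexity-based estimate that produces the quadratic-in-$p(k)$, linear-in-$\kappa_1$ coefficient as written; as it stands, neither is done.
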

	\vspace{-2mm}
	\begin{remark}
		Note that when $\kappa_2 = 1$, HierFAVG retrogrades to the FAVG algorihtm. In this case, $[p]$ is the same as $\{q\}$, $p(k) =1$, $\kappa_1 \kappa_2 = \kappa_1$, and $G_c(k) = h(k - (q-1)\kappa_1, \Delta+\delta,\eta _q)$. This is consistent with the result in \cite{JSAC18Wang}. When $\kappa_1 = \kappa_2 = 1$, HierFAVG retrogrades to the traditional gradient descent. In this case, $G_c(\kappa_1 \kappa_2) = 0$, implying the distibuted weights iteration is the same as the centralized weights iteration. 
	\end{remark} 
	\begin{remark}
		\label{GradientDivergence}
		
		The following upperbound of the weights deviation, $G_c(k)$, increases as we increase either of the two aggregation intervals, $\kappa_1$ and $\kappa_2$:
		\begin{equation}
		\label{delta}
		\small
		\begin{aligned}
		G_c (k, \eta_q)&\leq G_c(\kappa_1 \kappa_2, \eta_q)\\
		&= h(\kappa_1 \kappa_2, \Delta,\eta _q) + \frac{1}{2}(\kappa_2 ^2 +\kappa_2 -1)(\kappa_1+1) h(\kappa_1, \delta,\eta _q)
		\end{aligned}
		\end{equation}
		It is obvious that when $\delta = \Delta = 0$ (i.e., which means the client data distribution is IID), we have $G_c(k) = 0$, where the distributed weights iteration is the same as the centralized weights iteration.
		\par
		When the client data are non-IID, there are two parts in the expression of the weights deviation upper bound, $G(\kappa_1 \kappa_2, \eta_q)$. The first one is caused by Edge-Cloud divergence, and is exponential with both $\kappa_1$ and $\kappa_2$. The second one is caused by the Client-Edge Divergence, which is only exponential with $\kappa_1$, but quadratic with $\kappa_2$. From lemma \ref{lemma3}, we can see that a smaller model weight deviation leads to faster convergence. This gives us some qualitative guidelines in selecting the paramters in HierFAVG:
		\begin{enumerate}
			\item When the product of $\kappa_1$ and $\kappa_2$ is fixed, which means the number of local updates between two cloud aggreagtions is fixed, a smaller $\kappa_1$ with a larger $\kappa_2$ will result in a smaller deviation $G_c(\kappa_1, \kappa_2)$. This is consistent with our intuition, namely, frequent local model averaging can reduce the number of local iterations needed. 
			\item When the edge dataset is IID, meaning $\Delta = 0$, the first part in Eq. \eqref{delta} becomes $0$. The second part is dominated by $\kappa_1$, which suggests that when the distribution of edge dataset approaches IID, increasing $\kappa_2$ will not push up the deviation upper bound much. This suggests one way to to further reduce the communication with the cloud is to make the edge dataset IID distributed.
		\end{enumerate}  
	\end{remark}
	The result for the non-convex loss function is stated in the following lemma.
	
	\begin{lemma}[Non-convex]
		For any $i$, assuming $f_i(w)$ is $\beta$-smooth, for any cloud interval $\{q\}$ with step size $\eta_q$, we have
		\vspace{-2mm}
		\begin{equation*}
		\|\boldsymbol{w}(k) - \boldsymbol{u}_{\{q\}}(k)\| \leq G_{nc}(\kappa_1 \kappa_2, \eta_q)
		\end{equation*}
		where 
		\vspace{-2mm}
		\begin{equation*}
		\small
		\begin{split}
		G_{nc}(\kappa_1 \kappa_2, \eta_q) =& h(\kappa_1 \kappa_2, \Delta,\eta _q)\\
		&+ \kappa_1 \kappa_2\frac{(1+\eta_q \beta)^{\kappa_1 \kappa_2}-1}{(1+\eta_q \beta)^{\kappa_1} -1} h(\kappa_1, \delta, \eta_q) \\
		&+ h(\kappa_1,\delta, \eta _q),\\
		\end{split}
		\end{equation*}
		\begin{equation*}
		\small
		\begin{aligned}
		h(x, \delta, \eta) &= \frac{\delta}{\beta}\big((\eta \beta +1)^x -1\big) -\eta \beta x.
		\end{aligned}
		\end{equation*}
		\label{lemma2}
	\end{lemma}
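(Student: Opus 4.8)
The plan is to adapt the virtually-centralized-sequence argument used for Lemma~\ref{lemma1}, replacing the convex contraction estimates by the crude fact that, for a $\beta$-smooth objective, one step $\boldsymbol{x}\mapsto\boldsymbol{x}-\eta_q\nabla(\cdot)$ of gradient descent is $(1+\eta_q\beta)$-Lipschitz, and carrying the bookkeeping through the two nested aggregation levels. The first step is to note that inside a cloud interval $\{q\}$ an edge aggregation (and the cloud aggregation at its end) only redistributes mass within a group, so it changes neither the weighted global average $\boldsymbol{w}(k)$ nor any weighted edge average $\boldsymbol{w}^{\ell}(k):=\tfrac{1}{|\mathcal{D}^{\ell}|}\sum_{i\in\mathcal{C}^{\ell}}|\mathcal{D}_i^{\ell}|\boldsymbol{w}_i^{\ell}(k)$; hence $\boldsymbol{w}(k+1)=\boldsymbol{w}(k)-\eta_q\sum_{i,\ell}\tfrac{|\mathcal{D}_i^{\ell}|}{|\mathcal{D}|}\nabla F_i^{\ell}(\boldsymbol{w}_i^{\ell}(k))$ and $\boldsymbol{w}^{\ell}(k+1)=\boldsymbol{w}^{\ell}(k)-\eta_q\sum_{i\in\mathcal{C}^{\ell}}\tfrac{|\mathcal{D}_i^{\ell}|}{|\mathcal{D}^{\ell}|}\nabla F_i^{\ell}(\boldsymbol{w}_i^{\ell}(k))$ hold at every $k\in\{q\}$, and $\boldsymbol{w}$ agrees with $\boldsymbol{u}_{\{q\}}$ at the start of the interval.

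Besides $\boldsymbol{u}_{\{q\}}$ I would introduce two families of auxiliary sequences: the edge-centralized sequence $\boldsymbol{v}^{\ell}_{[p]}$ that runs gradient descent on $F^{\ell}$ and is reset to $\boldsymbol{w}^{\ell}((p-1)\kappa_1)$ at the start of each edge interval $[p]$, and the sequence $\boldsymbol{s}^{\ell}_{\{q\}}$ that runs gradient descent on $F^{\ell}$ over the whole cloud interval and is reset to $\boldsymbol{w}((q-1)\kappa_1\kappa_2)$ only at cloud aggregation. The target is then controlled by $\|\boldsymbol{w}(k)-\boldsymbol{u}_{\{q\}}(k)\|\le\sum_{i,\ell}\tfrac{|\mathcal{D}_i^{\ell}|}{|\mathcal{D}|}\|\boldsymbol{w}_i^{\ell}(k)-\boldsymbol{u}_{\{q\}}(k)\|$ together with the chain $\|\boldsymbol{w}_i^{\ell}(k)-\boldsymbol{u}_{\{q\}}(k)\|\le\|\boldsymbol{w}_i^{\ell}(k)-\boldsymbol{v}^{\ell}_{[p]}(k)\|+\|\boldsymbol{v}^{\ell}_{[p]}(k)-\boldsymbol{s}^{\ell}_{\{q\}}(k)\|+\|\boldsymbol{s}^{\ell}_{\{q\}}(k)-\boldsymbol{u}_{\{q\}}(k)\|$, which separates the deviation into a client-level part (carrying $\delta$) and an edge-level part (carrying $\Delta$).

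Each of the three pieces obeys an elementary one-step recursion. For the first, $\beta$-smoothness plus $\|\nabla F_i^{\ell}-\nabla F^{\ell}\|\le\delta_i^{\ell}$ gives, by induction on the step, $\|\boldsymbol{w}_i^{\ell}(k)-\boldsymbol{v}^{\ell}_{[p]}(k)\|\le\tfrac{\delta_i^{\ell}}{\beta}\big((1+\eta_q\beta)^{x}-1\big)$ inside $[p]$ with $x=k-(p-1)\kappa_1$, and running the same estimate for the weighted average $\boldsymbol{w}^{\ell}-\boldsymbol{v}^{\ell}_{[p]}$ sharpens it to $h(x,\delta^{\ell},\eta_q)$; this is the client-level term, reset every $\kappa_1$ steps. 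The middle term compares two gradient-descent runs on the same $\beta$-smooth $F^{\ell}$ from different starts, so they separate by $(1+\eta_q\beta)$ per step; chaining the fresh within-interval drift $h(\kappa_1,\delta^{\ell},\eta_q)$ with $(1+\eta_q\beta)^{\kappa_1}$ times the gap inherited from the previous edge interval, and solving that recursion over the $\kappa_2$ edge intervals comprising $\{q\}$, yields the geometric factor $\tfrac{(1+\eta_q\beta)^{\kappa_1\kappa_2}-1}{(1+\eta_q\beta)^{\kappa_1}-1}$ multiplying $h(\kappa_1,\delta^{\ell},\eta_q)$. The last term compares gradient descent on $F^{\ell}$ and on $F$ from the same point, so $\|\nabla F^{\ell}-\nabla F\|\le\Delta^{\ell}$ makes it grow geometrically over the $\kappa_1\kappa_2$ steps. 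Inserting the three bounds into the per-step increment $\|\boldsymbol{w}(k+1)-\boldsymbol{u}_{\{q\}}(k+1)\|-\|\boldsymbol{w}(k)-\boldsymbol{u}_{\{q\}}(k)\|\le\eta_q\beta\sum_{i,\ell}\tfrac{|\mathcal{D}_i^{\ell}|}{|\mathcal{D}|}\|\boldsymbol{w}_i^{\ell}(k)-\boldsymbol{u}_{\{q\}}(k)\|$, using $\sum_{i,\ell}\tfrac{|\mathcal{D}_i^{\ell}|}{|\mathcal{D}|}\delta_i^{\ell}=\delta$, $\sum_{\ell}\tfrac{|\mathcal{D}^{\ell}|}{|\mathcal{D}|}\Delta^{\ell}=\Delta$ and the linearity of $h$ in its divergence argument, and summing the resulting geometric series over the steps of the cloud interval produces $h(\kappa_1\kappa_2,\Delta,\eta_q)$ from the $\Delta$-part and a multiple of $h(\kappa_1,\delta,\eta_q)$ from the $\delta$-parts; bounding that multiple crudely by $\kappa_1\kappa_2\,\tfrac{(1+\eta_q\beta)^{\kappa_1\kappa_2}-1}{(1+\eta_q\beta)^{\kappa_1}-1}$ and adding the residual $h(\kappa_1,\delta,\eta_q)$ from the (partial) last edge interval assembles exactly $G_{nc}(\kappa_1\kappa_2,\eta_q)$; since $h$ is non-decreasing in its step argument over the relevant range, the bound holds uniformly for all $k\in\{q\}$.

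I expect the main obstacle to be the recursion bookkeeping across the two nested levels: keeping straight which of $\boldsymbol{v}^{\ell}_{[p]}$, $\boldsymbol{s}^{\ell}_{\{q\}}$, $\boldsymbol{u}_{\{q\}}$ is reset at which aggregation instant, ensuring the edge-interval drift is not double counted when it is fed back into the cloud-level recursion, and carrying out the geometric summations with the periodic, sawtooth-shaped client-drift forcing carefully enough to land on the stated coefficients; the $\kappa_1\kappa_2$ prefactor in particular is simply the price of bounding a geometrically weighted sum of per-interval drifts by its number of terms times the largest term. The non-convexity itself is not an obstacle --- it only forces us to give up the $(1-\beta\eta_q/2)$ contraction used in Lemma~\ref{lemma1}, which is exactly why $G_{nc}$ comes out larger than $G_c$.
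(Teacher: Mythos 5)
The paper does not actually carry out a proof of this lemma in the text you were given --- it states the result and remarks only that the method is to track the deviation of $\boldsymbol{w}(k)$ from the virtually centralized sequence $\boldsymbol{u}_{\{q\}}(k)$, following Wang et al.\ \cite{JSAC18Wang}. Your proposal is precisely that method lifted to two aggregation levels, so the architecture (aggregation preserves weighted averages; $(1+\eta_q\beta)$-Lipschitz growth per step from $\beta$-smoothness; a three-way split into client-level, inter-edge-interval, and edge-level deviations) is the intended one, and I see no missing idea. Two bookkeeping points to watch when you execute it. First, the sharpened $h(\cdot,\delta,\cdot)$ with the subtracted linear term is only available for \emph{averaged} deviations, so you must apply the triangle inequality after averaging within each edge, i.e.\ decompose $\boldsymbol{w}(k)-\boldsymbol{u}_{\{q\}}(k)=\sum_\ell \tfrac{|\mathcal{D}^\ell|}{|\mathcal{D}|}\bigl[(\boldsymbol{w}^\ell(k)-\boldsymbol{v}^\ell_{[p]}(k))+(\boldsymbol{v}^\ell_{[p]}(k)-\boldsymbol{s}^\ell_{\{q\}}(k))+(\boldsymbol{s}^\ell_{\{q\}}(k)-\boldsymbol{u}_{\{q\}}(k))\bigr]$; your first displayed chain passes through the individual $\|\boldsymbol{w}_i^\ell(k)-\boldsymbol{u}_{\{q\}}(k)\|$, which only yields the unsharpened $\tfrac{\delta}{\beta}\bigl((1+\eta_q\beta)^{x}-1\bigr)$ for the client-level piece and would overshoot the stated third summand of $G_{nc}$ by a linear-in-$\kappa_1$ term (you clearly know the fix, since you invoke the averaged estimate a sentence later, but the two statements as written are inconsistent). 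Second, your own recursion $g_{p+1}\le(1+\eta_q\beta)^{\kappa_1}g_p+h(\kappa_1,\delta,\eta_q)$ for the inter-interval gap already sums to $\tfrac{(1+\eta_q\beta)^{\kappa_1\kappa_2}-1}{(1+\eta_q\beta)^{\kappa_1}-1}\,h(\kappa_1,\delta,\eta_q)$ with no extra factor, so the $\kappa_1\kappa_2$ prefactor in the stated $G_{nc}$ is slack you do not need to manufacture --- you prove a strictly tighter bound and conclude by $\kappa_1\kappa_2\ge 1$, rather than by the ``number of terms times largest term'' heuristic, which as stated would only justify a factor $\kappa_2$ anyway. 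Neither point is a gap in the idea; both are resolvable within your outline.
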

	\vspace{-3mm}
	With the help of the weight deviation upperbound, we are now ready to prove the convergence of HierFAVG for both convex and non-convex loss functions.

	\vspace{-2mm}
	\begin{theorem}[Convex]
	\label{cf}
	 For any $i$, assuming $f_i(w)$ is $\rho$-continuous, $\beta$-smooth and convex, and denoting $F_{inf} = F(\boldsymbol{w}^*)$, then after $K$ local updates, we have the following convergence upper bound of $\boldsymbol{w}(k)$ in HierFAVG with a fixed step size:
	\vspace{-2mm}
	\begin{equation*}
		\small
		F(w(K)) - F(\boldsymbol{w}^*) \leq \frac{1}{T(\eta \varphi - \frac{\rho G_c(\kappa_1 \kappa_2, \eta)}{\kappa_1 \kappa_2  \varepsilon^2})}
	\end{equation*}
	\end{theorem}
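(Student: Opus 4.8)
The plan is to mirror the two-layer analysis of Lemma~\ref{lemma3}, viewing each \emph{cloud} interval $\{q\}$ as a single ``FAVG round'' whose aggregation period is $\kappa_1\kappa_2$ and whose distributed-weight deviation is controlled by $G_c(\kappa_1\kappa_2,\eta)$ from Lemma~\ref{lemma1}. Write $T=K/(\kappa_1\kappa_2)$ for the number of cloud intervals and set $\theta_q = F(\boldsymbol{w}(q\kappa_1\kappa_2)) - F_{inf}$. The target is a per-interval recursion $\tfrac{1}{\theta_q} - \tfrac{1}{\theta_{q-1}} \ge \eta\varphi - \tfrac{\rho\,G_c(\kappa_1\kappa_2,\eta)}{\kappa_1\kappa_2\,\varepsilon^2}$, which telescopes over $q=1,\dots,T$ (using $\theta_0\ge 0$) to give $\tfrac{1}{\theta_T} \ge T\big(\eta\varphi - \tfrac{\rho\,G_c(\kappa_1\kappa_2,\eta)}{\kappa_1\kappa_2\,\varepsilon^2}\big)$, which is the claimed bound on $F(\boldsymbol{w}(K)) - F_{inf}$.

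First I would bound the progress of the virtually centralized sequence inside one cloud interval. Since $\boldsymbol{u}_{\{q\}}(k)$ is exact gradient descent with $\eta\le 1/\beta$, $\beta$-smoothness gives the one-step decrease $F(\boldsymbol{u}_{\{q\}}(k+1)) \le F(\boldsymbol{u}_{\{q\}}(k)) - \eta(1-\tfrac{\beta\eta}{2})\|\nabla F(\boldsymbol{u}_{\{q\}}(k))\|^2$. Convexity, together with the fact that $\|\boldsymbol{u}_{\{q\}}(k)-\boldsymbol{w}^*\|$ is non-increasing along GD (so $1/\|\boldsymbol{u}_{\{q\}}(k)-\boldsymbol{w}^*\|\ge\omega$) and the bound $\|\nabla F(\boldsymbol{u})\|\ge (F(\boldsymbol{u})-F_{inf})/\|\boldsymbol{u}-\boldsymbol{w}^*\|$, converts this into a per-step reciprocal-gap increase $\tfrac{1}{F(\boldsymbol{u}_{\{q\}}(k+1))-F_{inf}} - \tfrac{1}{F(\boldsymbol{u}_{\{q\}}(k))-F_{inf}} \ge \eta\varphi$, valid as long as $F(\boldsymbol{u}_{\{q\}}(k))-F_{inf}\ge\varepsilon$ (the hierarchical analogue of condition~(iii) of Lemma~\ref{lemma3}). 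Summing over the $\kappa_1\kappa_2$ steps of the interval and using the synchronization $\boldsymbol{u}_{\{q\}}((q-1)\kappa_1\kappa_2)=\boldsymbol{w}((q-1)\kappa_1\kappa_2)$ yields $\tfrac{1}{F(\boldsymbol{u}_{\{q\}}(q\kappa_1\kappa_2))-F_{inf}} - \tfrac{1}{\theta_{q-1}} \ge \eta\varphi\,\kappa_1\kappa_2$.

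Next I would transfer this to the true weights. $\rho$-continuity of $F$ and Lemma~\ref{lemma1} give $|F(\boldsymbol{w}(q\kappa_1\kappa_2)) - F(\boldsymbol{u}_{\{q\}}(q\kappa_1\kappa_2))| \le \rho\,G_c(\kappa_1\kappa_2,\eta)$; inserting this into the previous display and using $\theta_q\ge\varepsilon$ and $F(\boldsymbol{u}_{\{q\}}(q\kappa_1\kappa_2))-F_{inf}\ge\varepsilon$ to bound the products that arise when one clears denominators produces the target recursion $\tfrac{1}{\theta_q} - \tfrac{1}{\theta_{q-1}} \ge \eta\varphi - \tfrac{\rho\,G_c(\kappa_1\kappa_2,\eta)}{\kappa_1\kappa_2\,\varepsilon^2}$, whose right-hand side is positive by the hierarchical analogue of condition~(ii) of Lemma~\ref{lemma3}. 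Telescoping as in the first paragraph then finishes the proof.

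The substantive work — how far the edge-aggregated weights drift from pure centralized descent within a cloud window — is already packaged into Lemma~\ref{lemma1}, so what remains is the bookkeeping that makes each cloud interval look identical to one FAVG round. The step requiring the most care is the clearing of denominators in the transfer step: one must apply the $\varepsilon$ lower bounds to precisely the right quantities so that the perturbation term emerges in the stated normalized form $\rho\,G_c(\kappa_1\kappa_2,\eta)/(\kappa_1\kappa_2\,\varepsilon^2)$ rather than with a spurious factor. It is also worth remarking that the per-interval reset of $\boldsymbol{u}_{\{q\}}$ after each cloud aggregation is exactly what prevents the deviation from accumulating over the whole run, so that only the single-window bound $G_c(\kappa_1\kappa_2,\eta)$ — and not a quantity growing with $K$ — enters the final estimate.
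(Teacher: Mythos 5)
Your proposal is correct and takes essentially the same route as the paper: the paper's entire proof is to substitute $M$ in Lemma~\ref{lemma3} with $G_c(\kappa_1\kappa_2,\eta)$ from Lemma~\ref{lemma1} and $\kappa$ with $\kappa_1\kappa_2$, treating each cloud interval as one FAVG round. You simply unpack the internals of Lemma~\ref{lemma3}'s argument (the per-interval reciprocal-gap recursion and telescoping) rather than citing it as a black box, which adds detail but no new idea.
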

	
	\begin{proof}
		By directly substituting $M$ in Lemma \ref{lemma3} with $G(\kappa_1 \kappa_2)$ in Lemma \ref{lemma1}, we prove Theorem \ref{cf}.
	\end{proof}
	\vspace{-2mm}
	\begin{remark}
		Notice in the condition $\varepsilon ^ 2 > \frac{\rho G(\kappa_1 \kappa_2)}{\kappa_1 \kappa_2 \eta \varphi} $ of Lemma \ref{lemma3}, $\varepsilon$ does not decrease when $K$ increases. We cannot have $F(w(K)) - F(\boldsymbol{w}^*) \rightarrow 0$ as $K \rightarrow \infty$. This is because the variance in the gradients introduced by non-IIDness cannnot be eliminated by fixed-stepsize gradient descent.
	\end{remark}

	\begin{remark}
		\label{cd}
		With diminishing step sizes $\{ \eta _q \}$ that satisfy $\sum_{q=1}^{\infty} \eta_q = \infty, \sum_{q=1}^{\infty} \eta_q ^2 < \infty $, 
		the convergence upper bound for HierFAVG after $K = B\kappa_1\kappa_2$ local updates is:
		\begin{equation*}
		\small
		\label{upperbound}
		F(w(K)) - F(\boldsymbol{w}^*) \leq \frac{1}{\sum_{q=1}^B(\eta_q \varphi_q - \frac{\rho G_c(\kappa_1 \kappa_2, \eta_q)}{\kappa_1 \kappa_2 \varepsilon_q^2}) } \xrightarrow{B \rightarrow \infty}  0.
		\end{equation*}
	\end{remark}
	
	Now we consider non-convex loss functions, which appear in ML models such as neural networks.
	\vspace{-1mm}
	\begin{theorem}[Non-convex]
		For any $i$, assume that $f_i(w)$ is $\rho$-continuous, and $\beta$-smooth. Also assume that HierFAVG is initialized from $\boldsymbol{w}_0$, $F_{inf} = F(\boldsymbol{w}^*)$, $\eta_q$ in one cloud interval $\{q\}$ is a constant, then after $K =B\kappa_1\kappa_2$ local updates, the expected average-squared gradients of $F(\boldsymbol{w})$ is upper bounded as:
		\begin{equation}
			\small
			\label{nonconvex}
			\begin{split}
			\frac{\sum_{k=1}^{K} \eta_q \| \nabla F(w(k)) \|^2}{\sum_{k=1}^{K} \eta_q}  & \leq \frac{4[F(\boldsymbol{w}_0) - F(\boldsymbol{w^*})]}{\sum_{k=1}^{K} \eta_q}\\
			&+ \frac{4\rho \sum_{q=1}^{B}G_{nc}(\kappa_1, \kappa_2, \eta_q)}{\sum_{k=1}^{K} \eta_q}\\
			&+ \frac{2\beta ^2 \sum_{q=1}^{B}\kappa_1 \kappa_2 \|G_{nc}(\kappa_1 \kappa_2, \eta_q) \|^2}{\sum_{k=1}^{ K} \eta_q}.
			\end{split}
		\end{equation}
	\end{theorem}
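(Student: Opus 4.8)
The plan is to argue one cloud interval at a time, using the virtually centralized sequence $\boldsymbol{u}_{\{q\}}(k)$ as a stand-in for the true averaged iterate $\boldsymbol{w}(k)$ and charging the mismatch to the deviation bound $G_{nc}(\kappa_1\kappa_2,\eta_q)$ of Lemma \ref{lemma2}. Observe first that $F$, being a convex combination of the per-sample losses, inherits $\beta$-smoothness and $\rho$-continuity; convexity of $F$ is never used, which is exactly why the non-convex case goes through. Step one is a per-step descent inequality for the pure gradient iteration inside a cloud interval: for $\boldsymbol{u}_{\{q\}}(k+1)=\boldsymbol{u}_{\{q\}}(k)-\eta_q\nabla F(\boldsymbol{u}_{\{q\}}(k))$, $\beta$-smoothness gives $F(\boldsymbol{u}_{\{q\}}(k+1))\le F(\boldsymbol{u}_{\{q\}}(k))-\eta_q(1-\tfrac{\beta\eta_q}{2})\|\nabla F(\boldsymbol{u}_{\{q\}}(k))\|^2$, and under the (implicit) step-size condition $\eta_q\le 1/\beta$ the bracket is at least $\tfrac12$, so
\begin{equation*}
\small
F(\boldsymbol{u}_{\{q\}}(k+1))\le F(\boldsymbol{u}_{\{q\}}(k))-\tfrac{\eta_q}{2}\|\nabla F(\boldsymbol{u}_{\{q\}}(k))\|^2 .
\end{equation*}
Summing over the $\kappa_1\kappa_2$ steps of cloud interval $\{q\}$ telescopes the left side to $F(\boldsymbol{u}_{\{q\}}(q\kappa_1\kappa_2))$, starting from $F(\boldsymbol{u}_{\{q\}}((q-1)\kappa_1\kappa_2))=F(\boldsymbol{w}((q-1)\kappa_1\kappa_2))$ by definition of $\boldsymbol{u}_{\{q\}}$.

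Step two chains the intervals. At the cloud aggregation closing $\{q\}$ the proxy and the true iterate differ, but $\rho$-continuity with Lemma \ref{lemma2} gives $F(\boldsymbol{w}(q\kappa_1\kappa_2))\le F(\boldsymbol{u}_{\{q\}}(q\kappa_1\kappa_2))+\rho\,G_{nc}(\kappa_1\kappa_2,\eta_q)$; because the next interval re-initializes its proxy at $\boldsymbol{w}(q\kappa_1\kappa_2)$, the deviation is reset at every cloud aggregation rather than accumulated across intervals. Telescoping the per-interval inequalities over $q=1,\dots,B$ and using $F(\boldsymbol{w}(K))\ge F_{inf}=F(\boldsymbol{w}^*)$ then yields
\begin{equation*}
\small
\begin{aligned}
\tfrac{1}{2}\sum_{q=1}^{B}\eta_q\sum_{k\in\{q\}}\|\nabla F(\boldsymbol{u}_{\{q\}}(k))\|^2 &\le F(\boldsymbol{w}_0)-F(\boldsymbol{w}^*)\\
&\quad+\rho\sum_{q=1}^{B}G_{nc}(\kappa_1\kappa_2,\eta_q).
\end{aligned}
\end{equation*}

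Step three transfers gradients from $\boldsymbol{u}_{\{q\}}(k)$ back to $\boldsymbol{w}(k)$: combining $\|a\|^2\le 2\|b\|^2+2\|a-b\|^2$ with the Lipschitz-gradient bound $\|\nabla F(\boldsymbol{w}(k))-\nabla F(\boldsymbol{u}_{\{q\}}(k))\|\le\beta\|\boldsymbol{w}(k)-\boldsymbol{u}_{\{q\}}(k)\|$ and Lemma \ref{lemma2} once more gives $\|\nabla F(\boldsymbol{w}(k))\|^2\le 2\|\nabla F(\boldsymbol{u}_{\{q\}}(k))\|^2+2\beta^2 G_{nc}(\kappa_1\kappa_2,\eta_q)^2$. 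Multiplying by $\eta_q$, summing over all $K$ iterates ($\kappa_1\kappa_2$ per interval), inserting the Step-two bound, and dividing by $\sum_{k=1}^{K}\eta_q$ collects the three terms of \eqref{nonconvex}: the factor $4$ in the first two terms is $2\times2$ (one $2$ from rearranging the descent lemma, one from $\|a\|^2\le 2\|b\|^2+2\|a-b\|^2$), while the last term is the $\beta^2$-weighted smoothness-transfer remainder. I expect the main obstacle to be the boundary bookkeeping: pinning down which iterate index the re-initialization $\boldsymbol{u}_{\{q\}}((q-1)\kappa_1\kappa_2)=\boldsymbol{w}((q-1)\kappa_1\kappa_2)$ attaches to, confirming that $G_{nc}$ — established as a within-interval deviation bound in Lemma \ref{lemma2} — is the right quantity to charge at the interval endpoint, and verifying that the two telescopings (over $k$ within an interval, and over $q$ across intervals) close with no residual cross-interval error. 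If stochastic rather than full gradients are intended for the ``expected'' in the statement, an extra variance term enters through the standard $\mathbb{E}\|\cdot\|^2$ decomposition but is otherwise handled the same way.
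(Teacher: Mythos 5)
The paper does not actually print a proof of this theorem (it is omitted from the conference version), but your argument is exactly the one its machinery is built for — descent lemma on the virtual sequence $\boldsymbol{u}_{\{q\}}$, telescoping within and across cloud intervals with the mismatch charged to $G_{nc}$ via $\rho$-continuity, and the $\|a\|^2\le 2\|b\|^2+2\|a-b\|^2$ plus $\beta$-smoothness transfer back to $\boldsymbol{w}(k)$ — and it reproduces the stated constants ($4=2\times 2$, $4\rho$, $2\beta^2$) precisely. Two small points to tidy up: the step-size condition $\eta_q\le 1/\beta$ must be stated explicitly since the theorem omits it, and your third term carries an extra factor of $\eta_q$ relative to the stated bound (yours is the tighter form; the paper's version follows by additionally bounding $\eta_q\le 1$), while the off-by-one in the index sets is resolved by noting $\boldsymbol{u}_{\{q+1\}}(q\kappa_1\kappa_2)=\boldsymbol{w}(q\kappa_1\kappa_2)$ at each re-initialization.
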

	\vspace{-2mm}
	\begin{remark}
		When the stepsize $\{\eta_q \}$ is fixed, the weighted average norm of the gradients converges to some non-zero number.
		When the stepsize $\{\eta_q \}$ satisfies $\sum_{q=1}^{\infty} \eta_q = \infty, \sum_{q=1}^{\infty} \eta_q ^2 < \infty $, \eqref{nonconvex} converges to zero as $K \rightarrow \infty$.
	\end{remark}

\section{Experiments} \label{Experiments}
	In this section, we present simulation results for HierFAVG to verify the obeservations from the convergence analysis and illustrate the advantages of the hierarchical FL system. As shown in Fig. \ref{fig3}, the advantage over the edge-based FL system in terms of the model accuracy is obvious. Hence, we shall focus on the comparison with the cloud-based FL system.
	\vspace{-1mm}
	\subsection{Settings}
	\par
	We consider a hierarchical FL system with $50$ clients, $5$ edge servers and a cloud server, assuming each edge server authorizes the same number of clients with the same amount of training data.
	For the ML tasks, image classification tasks are considered and standard datasets \textit{MNIST} and \textit{CIFAR-10} are used.
	For the 10-class hand-written digit classification dataset \textit{MNIST}, we use the Convolutional Neural Network (CNN) with 21840 trainable parameters as in \cite{FL17McMahan}. For the local computation of the training with \textit{MNIST} on each client, we employ mini-batch Stochastic Gradient Descent (SGD) with batch size 20, and an initial learning rate 0.01 which decays exponetially at a rate of 0.995 with every epoch. For the \textit{CIFAR-10} dataset, we use a CNN with 3 convolutional blocks, which has 5852170 parameters and achieves 90\% testing accuracy in centralized training. For the local computation of the training with \textit{CIFAR-10}, mini-batch SGD is also employed with a batch size of 20, an inital learing rate of 0.1 and an exponetial learning rate decay of 0.992 every epoch. In the experiments, we also notice that using SGD with momentum can speed up training and improve the final accuracy evidently. But the benefits of the hierarchical FL system always continue to exist with or without the momentum. To be consistent with the analysis, we do not use momentum in the experiments. 
	\par
	Non-IID distribution in the client data is a key influential factor in FL. In our proposed hierarchical FL system, there are two levels of non-IIDness. In addition to the most commonly used non-IID data partition \cite{FL17McMahan}, referred to as simple NIID where each client owns samples of two classes and the clients are randomly assigned to each edge server, we will also consider the follwoing two non-IID cases for \textit{MNIST}: 
	\begin{enumerate}
		\item Edge-IID: Assign each client samples of one class, and assign each edge 10 clients with different classes. The datasets among edges are IID.
		\item Edge-NIID: Assign each client samples of one class, and assign each edge 10 clients with a total of $5$ classes of labels. The datasets among edges are non-IID.
 	\end{enumerate}	
 	\par
 	In the following, we provide the models for wireless communications and local computations \cite{FLOptimizationModel19Tran}. We ignore the possible heterogeneous communication conditions and computing resources for different clients. For the communication channel between the client and edge server, clients upload the model through a wireless channel of 1 MHz bandwidth with a channel gain $g$ equals to $10^{-8}$. The transmitter power $p$ is fixed at 0.5W, and the noise power $\sigma$ is $10^{-10}$W. For the local computation model, the number of CPU cycles to excute one sample $c$ is assumed to be 20 cycles/bit, CPU cycle frequency $f$ is 1 GHz and the effective capacitance is $2 \times 10^{-28}$. For the communication latency to the cloud, we assume it is 10 times larger than that to the edge. Assume the uploaded model size is $M$ bits, and one local iteration involes $D$ bits of data. In this case, the latency and energy consumption for one model upload and one local iteration can be caculated with the following equations (Specific paramters are shown in table \ref{tab:table2}):
 	\vspace{-2mm}
 	\begin{equation}
 	\small
 	T^{comp} = \frac{cD}{f}, \quad E^{comp} = \frac{\alpha}{2}cDf^2,
 	\end{equation}
 	\begin{equation}
 	\small
 	T^{comm} = \frac{M}{B \log _2(1+\frac{hp}{\sigma})}, \quad E^{comm} = pT^{comm}
 	\end{equation}
 	\vspace{-3mm}
 	\par
 	To investigate the local energy consumption and training time in an FL system, we define the following two metrics:
 	\begin{enumerate}
 		\item $T_\alpha$: The training time to reach a test accuracy level $\alpha$;
 		\item $E_\alpha$: The local energy consumption to reach a test accuracy level $\alpha$.
 	\end{enumerate}
 	\vspace{-2mm}
	\subsection{Results}
	\par
	We first verify the two qualitative guidelines on the key parameters in HierFAVG from the convergence analysis, i.e., $\kappa_1, \kappa_2$. The experiments are done with the \textit{MNIST} dataset under two non-IID scenarios, edge-IID and edge-NIID.
	\par 
	The first conclusion to verify is that more frequent communication with the edge (i.e., fewer local updates $\kappa_1$) can speed up the training process when the communciation frequency with the cloud is fixed (i.e., $\kappa_1\kappa_2$ is fixed.). In Fig. \ref{edgeiid1} and Fig. \ref{edgeniid1}, we fix the communication frequency with the cloud server at 60 local iterations, i.e., $\kappa_1$$\kappa_2$=60 and change the value of $\kappa_1$. For both kinds of non-IID data distribution, as we decrease $\kappa_1$, the desired accuracy can be reached with fewer training epochs, which means fewer local computations are needed on the devices. 
	\par
	The second conclusion to verify is that when the datasets among edges are IID and the communication freqency with the edge server is fixed, decreasing the communication frequency with the cloud server will not slow down the training process. In Fig. \ref{edgeiid1}, the test accuracy curves with the same $\kappa_1=60$ and different $\kappa_2$ almost coincide with each other. But for edge-NIID in Fig. \ref{edgeniid1}, when $\kappa_1=60$, increasing $\kappa_2$ will slow down the training process, which strongly supports our analysis. This property indicates that we may be able to further reduce the high-cost communication with the cloud under the edge-IID scenario, with litttle performance loss.
	\par	
	\begin{table}[t]
		\small
		\begin{center}
			\caption{The latency and energy consumption paramters for the communication and computation of \textit{MNIST} and \textit{CIFAR-10}.}
			\label{tab:table2}
			\begin{tabular}{l|c|c|c|c} 
				\hline
				\textbf{Dataset} & \textbf{$T^{comp}$} & \textbf{$T^{comm}$} & \textbf{$E^{comp}$}  & \textbf{$E^{comm}$}\\
				\hline
				\textit{MNIST} & 0.024s & 0.1233s & 0.0024J & 0.0616J\\
				\textit{CIFAR-10} & 4 & 33s & 0.4J & 16.5J\\
				\hline
			\end{tabular}
		\end{center}
	\end{table}
	
	\begin{figure}[t!]
		\label{figall}
		\begin{subfigure}[t]{0.24\textwidth}
			\centering
			\includegraphics[keepaspectratio=true,scale=0.13]{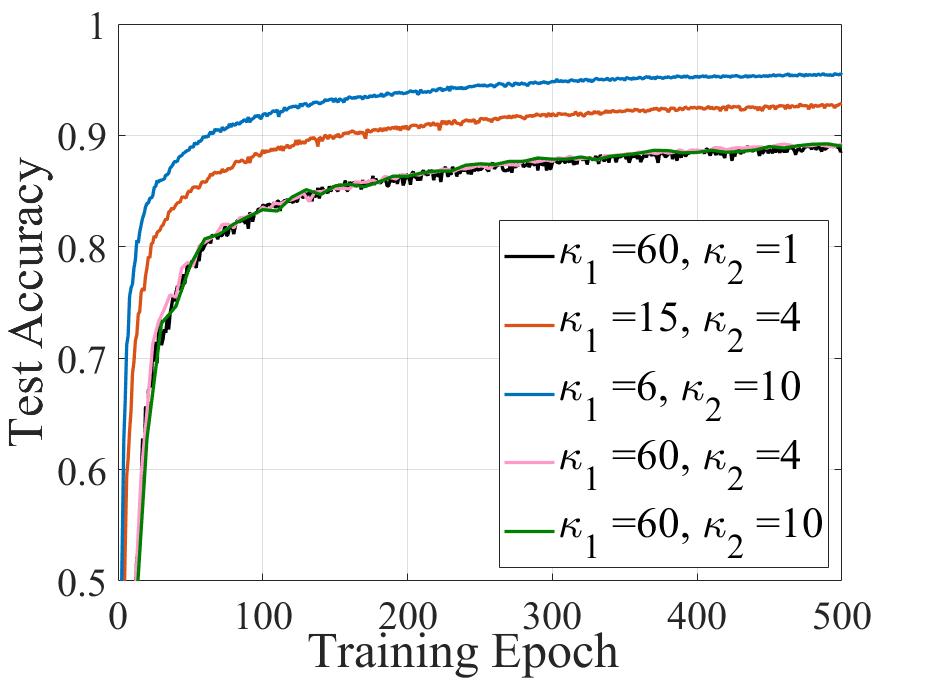}
			\caption{Edge-IID.}
			\label{edgeiid1}
		\end{subfigure}%
		\begin{subfigure}[t]{0.24\textwidth}
			\centering
			\includegraphics[keepaspectratio=true,scale=0.13]{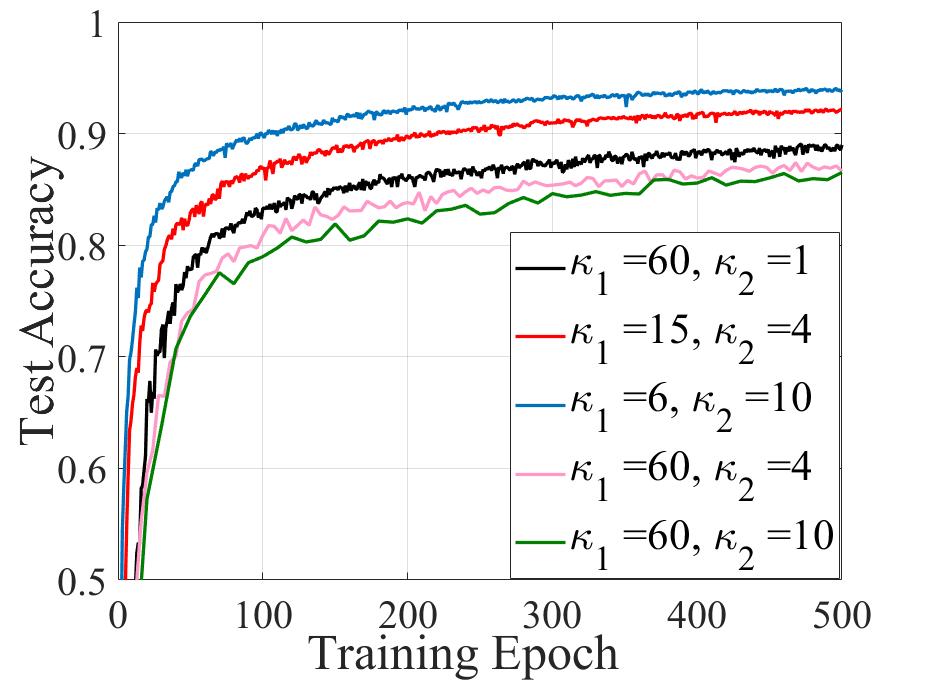}
			\caption{Edge-NIID.}
			\label{edgeniid1}
		\end{subfigure}
		\caption{Test accuracy of \textit{MNIST} dataset w.r.t training epoch.}
	\end{figure}

	\par
	Next, we investigate two critical quantities in collaborative training systems, namely, the training time and energy consumption of mobile devices. We compare cloud-based FL ($\kappa_2$=1) and hierarchical FL in Table \ref{tableall}, assuming fixed $\kappa_1\kappa_2$. A close observation of the table shows that the training time to reach a certain test accuracy decreases monotonically as we increase the communication frequency (i.e., $\kappa_2$) with the edge server for both the \textit{MNIST} and \textit{CIFAR-10} datasets. This demonstrates the great advantage in training time of the hierarchical FL over cloud-based FL in training an FL model.  For the local energy consumption, it decreases first and then increases as $\kappa_2$ increases. Because increasing client-edge communication frequency moderately can reduce the consumed energy as fewer local computations are needed. But too frequent edge-client communication also consumes extra energy for data transmission. If the target is to minimize the device energy consumption, we should carefully balance the computation and communication energy by adjusting $\kappa_1$, $\kappa_2$.
	
	\setlength{\textfloatsep}{2pt plus 1.0pt minus 2.0pt}
	\begin{table}[t!]
		\small
		\centering
		\caption{Training time and local energy consumption.}
		\begin{subtable}{.5\textwidth}
			\begin{tabular}{|c| c| c| c| c|} 
				\hline
				& \multicolumn{2}{c|}{Edge-IID}& \multicolumn{2}{c|}{Edge-NIID}\\ [0.5ex]
				\hline
				& \text{\small $E_{0.85}$(J)} & \text{\small $T_{0.85}$(s)} & \text{\small $E_{0.85}$(J)} & \text{\small $T_{0.85}(s)$}\\ 
				\hline 
				\text{\small $\kappa_1=60,\kappa_2=1$ }  & 29.4  & 385.9  & 30.8 & 405.5 \\ 
				\hline
				\text{\small $\kappa_1=30,\kappa_2=2$ } & 21.9 & 251.1 & 28.6 & 312.4 \\
				\hline
				\text{\small $\kappa_1=15,\kappa_2=4$ } & \textbf{10.1} & 177.3 & \textbf{26.9} &  218.5 \\
				\hline
				\text{\small $\kappa_1=6,\kappa_2=10$ } & 19 & \textbf{97.7} & 28.9 &  \textbf{148.4} \\
				\hline
			\end{tabular}
			\caption{\textit{MNIST} with edge-IID and edge-NIID distribution.}
			\label{table1}
		\end{subtable}
	\\
		\begin{subtable}{.5\textwidth}
			\centering
			\begin{tabular}{|c| c| c|} 
				\hline
				& \text{\small $E_{0.70}(J)$} & \text{\small $T_{0.70}(s)$}\\ 
				\hline 
				\text{\small $\kappa_1 = 50,\kappa_2=1$ }  & 7117.5 &  109800 \\ 
				\hline
				\text{\small $\kappa_1=25,\kappa_2=2$ } & \textbf{6731} & 75760 \\
				\hline
				\text{\small $\kappa_1 = 10,\kappa_2=5$ }& 9635 &  65330 \\
				\hline
				\text{\small $\kappa_1 = 5, \kappa_2=10$ }& 13135 &  \textbf{49350} \\
				\hline
			\end{tabular}
			\caption{\textit{CIFAR-10} with simple NIID distribution.}
			\label{table2}
		\end{subtable}
	\label{tableall}
	\end{table}

\section{Conclusions} \label{Conclusion}
	In this paper, we proposed a client-edge-cloud hierarchical Federated Learning architecture, supported by a collaborative training algorithm, HierFAVG. The convergence analysis of HierFAVG was provided, leading to some qualitative design guidelines. In experiments, it was also shown that it can simultaneously reduce the model training time and the energy consumption of the end devices compared to traditional cloud-based FL. While our study revealed trade-offs in selecting the values of key parameters in the HierFAVG algorithm, future investigation will be needed to fully characterize and optimize these critical parameters.

\bibliographystyle{IEEEtraN}  
\bibliography{HierFLRef.bib}  

\end{document}